\renewcommand{\paragraph}[1]{%
  \par %
  \addvspace{1.0ex \@plus .2ex \@minus .2ex}%
  \noindent %
  {\normalfont\normalsize\scshape \MakeLowercase{#1}}\quad%
  \@afterindentfalse %
  \@afterheading     %
}
\newcolumntype{L}[1]{>{\raggedright\let\newline\\\arraybackslash\hspace{0pt}}m{#1}}
\newcolumntype{C}[1]{>{\centering\let\newline\\\arraybackslash\hspace{0pt}}m{#1}}
\newcolumntype{R}[1]{>{\raggedleft\let\newline\\\arraybackslash\hspace{0pt}}m{#1}}
\tikzset{>=latex}
\pgfplotsset{compat=1.18}
\newtheorem{theorem}{Theorem}
\newtheorem{example}{Example}
\newtheorem{definition}{Definition}
\newtheorem{problem}{Problem}
\newenvironment{continuance}[1]
  {\par\bigskip\noindent\textbf{Example #1. cont. }\itshape}
  {\par}
\newcommand{\TO}{TO}
\newcommand{\Act}{\ensuremath{\mathit{Act}}}
\newcommand{\act}{\ensuremath{a}}
\DeclareMathOperator{\supp}{supp}
\DeclareMathOperator{\successor}{succ}
\newcommand{\ObsSym}{{Z}}
\newcommand{\ObsFun}{{O}}
\newcommand{\obs}{\ensuremath{z}}
\newcommand{\pomdp}{\mathcal{M}}
\newcommand{\states}{\ensuremath{S}}
\newcommand{\init}{\ensuremath{I}}
\newcommand{\probmdp}{\mathcal{P}}
\newcommand{\rew}{\ensuremath{\mathcal{R}}}
\newcommand{\osched}{\ensuremath{\mathit{\pi}}}
\newcommand{\belief}{\mathfrak{b}}
\newcommand{\belsup}{B}
\newcommand{\belsups}{\mathcal{B}}
\newcommand{\avoid}{\textsl{AVOID}}
\newcommand{\reach}{\textsl{REACH}}
\newcommand{\psched}{\nu}
\newcommand{\estimator}{\sigma}
\newcommand{\probability}{\mathit{Pr}}
\definecolor{steve_red}{HTML}{FF0F19}
\definecolor{randomgray}{RGB}{128,128,128}
\definecolor{steve_green}{RGB}{1,68,33}
\definecolor{steve_yellow}{HTML}{FDE9CA}
\definecolor{steve_highlight}{HTML}{E99076}
\definecolor{headerblue}{RGB}{65, 81, 102}
\definecolor{myred}{RGB}{158, 37, 16}
\definecolor{mygreen}{RGB}{8, 158,95}
\definecolor{mydarkblue}{RGB}{8,59,158}
\definecolor{myyellow}{RGB}{158, 114,16}
\newcommand{\domain}[1]{\textsl{#1}}
\newcommand{\shield}{\mathsf{shield}}
\newcommand{\fillGrid}[3]{
		\fill[#1] (#2,#3) rectangle (#2+1,#3+1);
}
\let\MYcaption\@makecaption
\let\@makecaption\MYcaption
\setlist[itemize]{leftmargin=*}
\title{Compositional~Shield Synthesis\\ for Safe Reinforcement Learning in~Partial Observability}
\author{Steven Carr${}^1$, Georgios Bakirtzis${}^2$, Ufuk Topcu${}^1$\\[.25em]${}^1$ The University of Texas at Austin\\${}^2$ LTCI, Télécom Paris, Institut Polytechnique de Paris}
\date{}
\begin{document}
\maketitle
\begin{abstract}Agents controlled by the output of reinforcement learning (RL) algorithms often transition to unsafe states, particularly in uncertain and partially observable environments. Partially observable Markov decision processes (POMDPs) provide a natural setting for studying such scenarios with limited sensing. \emph{Shields} filter undesirable actions to ensure safe RL by preserving safety requirements in the agents' policy. However, synthesizing holistic shields is computationally expensive in complex deployment scenarios. We propose the \emph{compositional} synthesis of shields by modeling safety requirements by parts, thereby improving scalability. In particular, problem formulations in the form of POMDPs using  RL algorithms illustrate that an RL agent equipped with the resulting compositional shielding, beyond being safe, converges to higher values of expected reward. By using subproblem formulations, we preserve and improve the ability of shielded agents to require fewer training episodes than unshielded agents, especially in sparse-reward settings. Concretely, we find that compositional shield synthesis  allows an RL agent to remain safe in environments two orders of magnitude larger than other state-of-the-art model-based approaches.
\end{abstract}

\section{INTRODUCTION}

Reinforcement learning (RL) has emerged as a promising approach for constructing controllers that can operate in uncertain and partially observable environments, often modeled as partially observable Markov decision processes (POMDPs)~\cite{conf/ijcai/Even-DarKM05}. However, the exploratory nature of RL algorithms can lead to the execution of unsafe actions, which is particularly concerning in safety-critical domains. Shielding, a technique that restricts the action space of the learning agent to a pre-computed set of safe actions, has been shown to improve both performance and safety compared to other approaches, such as reward shaping \cite{DBLP:conf/aaai/Carr0JT23,DBLP:conf/aaai/AlshiekhBEKNT18}.

Despite its advantages, shielding in POMDPs faces a significant scalability challenge due to the exponential growth of the belief space with respect to the number of states. In a POMDP, the agent maintains a belief distribution over the possible states, and the size of this belief space is exponential in the number of states. Consequently, the memory required to track the belief support grows exponentially, making centralized shield synthesis computationally intractable for large POMDPs \cite{chatterjee2009sensitivity,DBLP:conf/cav/JungesJS20}, and therefore, realistic environments. Moreover, the shield synthesis problem itself is known to be EXPTIME-complete in the number of agent and belief states, further exacerbating the scalability issue~\cite{chatterjee2015qualitative}.

To tackle this challenge, we introduce a compositional approach to shield synthesis that exploits the inherent structure present in many POMDP scenarios. By decomposing the POMDP into smaller, more manageable submodels and computing local shields for each submodel, the compositional synthesis algorithm increases the threshold of computable shields. In particular, the compositional shielding framework allows for the parallel computation of sub-shields and enables efficient belief tracking across the submodels, ultimately facilitating the application of shielding to much larger POMDPs than previously possible.

This modular safety reasoning enables the independent synthesis of local shields, each operating over a reduced state space. Recent works have shown where local shields can be computed and later composed to form a coherent global safety mechanism~\cite{melcer2024shield,neary2023verifiable}. Our proposed framework significantly expands the class of POMDPs for which shield synthesis is tractable, offering a scalable alternative to centralized methods \cite{DBLP:conf/cav/JungesJS20} while maintaining formal safety guarantees. This compositional view not only provides improvements through structure in reinforcement learning~\cite{neary2023verifiable,simpkins2019composable} but also aligns naturally with real-world systems, which are often modular by design~\cite{Rosolia2020unified}.

The main contributions of this work are threefold.
\begin{enumerate}
\item We develop a compositional shield synthesis approach that guarantees safety both during and after learning in large-scale POMDPs, addressing the scalability limitations of centralized shielding methods.
\item We propose a mechanism for information sharing and belief tracking across the sub-shields, ensuring the coherence of the compositional shielding framework.
\item We integrate our compositional shielding approach with state-of-the-art deep RL algorithms, demonstrating its ability to calculate shields in two orders of magnitude larger than current approaches.
\end{enumerate}

We design experiments that find that compositional shielding improves learning performance, sample efficiency, and scalability. Our method  outperforms centralized shielding and unshielded RL baselines, while enabling the synthesis of shields for POMDPs that are two orders of magnitude larger than those feasible for existing centralized methods. 

\section{RELATED WORK}

Safe RL is necessary for deploying agents in high-stakes environments such as autonomous driving and robotics, where failures can lead to significant consequences. Because RL relies on exploratory interaction to learn the effects of actions~\cite{DBLP:conf/iros/PetersS06}, it is prone to selecting unsafe behaviors that may transition the system into hazardous states. The challenge of \emph{unsafe exploration} motivates various strategies for reducing the likelihood of dangerous behavior~\cite{garcia2015comprehensive}. Safe RL can be broadly grouped into three categories:
\begin{enumerate}
    \item \textbf{Reward shaping:} Enhancing the reward function to reflect safety preferences~\cite{laud2003influence}, often based on human feedback or expert demonstration.
    \item \textbf{Constrained optimization:} Introducing safety through explicit cost functions or constraints~\cite{DBLP:conf/icml/MoldovanA12}, which penalize unsafe actions or states.
    \item \textbf{Runtime shielding:} Synthesizing a safety mechanism that restricts the agent’s actions during execution~\cite{DBLP:conf/aaai/AlshiekhBEKNT18}.
\end{enumerate}

This work falls within the third category and extends the theory and application of shielding in reinforcement learning. In contrast to modifying reward signals or imposing soft penalties, shielding provides formal guarantees by filtering unsafe actions at runtime~\cite{DBLP:conf/aaai/Carr0JT23}. While shielding has been well studied in fully observable Markov decision processes~\cite{fulton2018safe,DBLP:journals/corr/abs-1904-07189}, scaling such techniques to partially observable environments remains a challenge.

POMDPs introduce an additional layer of complexity for safe RL due to uncertainty in both action outcomes and observations~\cite{DBLP:journals/ai/KaelblingLC98}. Effective planning in POMDPs typically requires reasoning over histories or belief states, and recent work has made progress in scaling deep POMDP solvers~\cite{hausknecht2015deep,DBLP:journals/corr/MnihKSGAWR13}. However, these approaches do not incorporate safety constraints directly. Prior work on shielding for deep RL under partial observability~\cite{DBLP:conf/aaai/Carr0JT23} provides safety guarantees but encounters significant scalability limitations.
In this paper, we address these limitations by proposing a compositional approach to shield synthesis for POMDPs. By decomposing the state space and safety specification into submodels, we synthesize local shields independently and combine them to form a global shield. This approach significantly improves scalability while preserving safety guarantees.

Recent work has explored compositional shielding in multi-agent systems. Melcer et al.~\cite{melcer2024shield} propose a decentralized shielding mechanism for multi-agent systems without requiring inter-agent communication, focusing on empirical performance. Brorholt et al.~\cite{brorholt2024compositional} apply assume-guarantee reasoning to construct independent shields for agents under a global specification.  We view these works as complementary lines of research that pursue compositional safety using different methods.

In summary, this work contributes to the body of research on runtime shielding in safe RL by introducing a compositional synthesis method for POMDPs. Unlike prior work, our approach enables scalable shield synthesis while maintaining formal guarantees of safety, and is distinct from recent multi-agent compositional methods in scope and application.

\section{THEORETICAL PRELIMINARIES}
The standard model for sequential decision-making in the presence of uncertainty is the Markov decision process~(MDP). MDPs traditionally model uncertainty in the form of stochasticity in the transition dynamics~\cite{DBLP:books/daglib/BaierKatoen2008}.
POMDPs are an extension of MDPs with an additional source of uncertainty: incomplete information about the system state due to noisy or partial observations.
By addressing these sources of uncertainty, POMDPs are effective in modeling real-world applications like motion-planning with limited visibility \cite{thrun2005probabilistic} or sampling terrain with noisy sensors~\cite{smith2004heuristic}.

\subsection{Partially Observable Markov Decision Processes}

In this subsection, we introduce POMDPs \cite{DBLP:journals/ai/KaelblingLC98} and theoretical preliminaries associated with modeling uncertain environments to train agents operating with incomplete information.

\begin{definition}[POMDP]
A (discrete) POMDP is a tuple $\pomdp = (\states,\init,\Act, \ObsFun,\ObsSym, \probmdp, \rew)$ where:
\begin{itemize}
\item $\states$ is a finite set of states.
\item $\init \in \Delta(\states)$ is the initial state distribution, with $\init(s)$ the probability of starting in state $s\in\states$.
\item $\Act$ is a finite set of actions. 
\item $\ObsSym$ is a finite set of observations and $\ObsFun(\obs|s)$ is the probability of observing $\obs\in\ObsSym$ in state $s\in\states$.
\item $\probmdp(s'| s,\act)$ is the transition probability of moving to state $s'\in\states$ after taking action $\act\in\Act$ in state $s\in\states$.
\item $\rew: \states \times \Act \to \mathbb{R}$ is the reward function, with $\rew(s, \act)$ the reward for taking action $\act$ in state $s$.
\end{itemize}
\end{definition}

We denote the set of available actions in state $s$
by $\Act(s) \subseteq \Act$.
In this work, we consider POMDPs with \emph{dead-ends}, i.e., states from which the agent cannot obtain positive rewards in the future~\cite{DBLP:conf/uai/KolobovMW12}. We do not want the agent to end up in such dead-end states because once there, it cannot achieve any positive rewards.

By defining successor states,
 we are essentially helping the agent understand which states are 
reachable or safe to transition to from a given set of states, thereby avoiding decisions that would lead it to 
dead-ends.

\begin{definition}[Successor States]
For a POMDP $\pomdp$ and set of states $\states'\subseteq\states$, the successors $\successor(\states')$ is the set of states $s\in\states$ for which there exists an action $\act\in\Act(s)$ with $\probmdp(s'|\act,s) > 0$ for some $s'\in\states'$.
\end{definition}

\subsection{Beliefs}

A belief state $\belief \in \Delta(\states)$ is a probability distribution over states representing the agent's belief about the likelihood of being in each state given the history of observations and actions. The \emph{support} of a belief state $\belief$, denoted $\supp(\belief)$, is the set of states with non-zero probability under $\belief$, i.e., $\supp(\belief) = \{s \in \states \colon \belief(s) > 0\}$ \cite{DBLP:journals/lmcs/RaskinCDH07}.

\begin{definition}[Belief Support]
For a given belief $\belief$, a state $s$ is in the belief support $\belsup$ if and only if $s \in \supp(\belief)$. The set of all belief supports is denoted $\belsups = \{B \subseteq \states \colon B \neq \emptyset\}$, i.e., the set of non-empty subsets of $\states$.
\end{definition}

If the transition and observation probabilities in the POMDP are known, the agent can update its belief state using Bayes' rule after each action and observation. However, in many practical settings, these probabilities may be unknown. In such cases, we can still update the belief support $\belsup$ using only the graph structure of the POMDP, without relying on exact probability values.

\begin{definition}[Belief Support Update]
Given a current belief support $\belsup$, action $\act$, and observation $\obs$, the updated belief support $\belsup'$ is the set of states $s'$ such that there exists a state $s \in \belsup$ and a non-zero probability transition from $s$ to $s'$ under action $\act$, and $\obs$ can be observed in $s'$, meaning that updating the belief support amounts to building the set
$$\belsup' = \{s' \in \states \colon \exists s \in \belsup \text{ s.t. } \probmdp(s'|s,\act) > 0 \text{ and } \ObsFun(\obs|s') > 0\}.$$
\end{definition}

\begin{definition}[State Estimator]
A state estimator is a function $\estimator: (\ObsSym \times \Act)^* \times \ObsSym \to \belsups$ that takes as input the history of observations and actions and returns the current belief support. The estimator can be implemented by repeatedly updating the belief support using the graph structure of the POMDP, without requiring knowledge of the exact transition and observation probabilities.
\end{definition}

A policy $\pi$ for a POMDP is a function mapping histories of observations and actions to distributions over actions, i.e., $\pi: (\ObsSym \times \Act)^* \times \ObsSym \to \Delta(\Act)$. This allows the choice of action to depend on the history, which is necessary for the agent to act optimally given its uncertainty about the true state. The value of a policy $\pi$ from a belief state $\belief$ is the expected total discounted reward
$$V^{\pi}(b) = \mathbb{E}_{\pi,\belief}\left[\sum_{t=0}^{\infty} \gamma^t R(s_t, a_t)\right],$$
where $\gamma \in [0, 1)$ is the discount factor, and the expectation is taken with respect to the distribution over trajectories induced by $\pi$ starting from belief state $\belief$.

The optimal policy $\pi^*$ maximizes the value function for all belief states
$$\pi^* = \operatorname*{argmax}_{\pi} V^{\pi}(b), \quad \forall \belief \in \Delta(\states).$$
Computing the optimal policy for a POMDP is generally intractable~\cite{DBLP:journals/mor/PapadimitriouT87}, but various approximate solution methods exist, such as point-based value iteration \cite{pineau2003point} and online search methods \cite{ross2008online}.

\subsection{Shields}

In safety-critical settings, an agent must not only maximize rewards but also adhere to safety constraints. These constraints are captured using \emph{(qualitative) reach-avoid} specifications, a subclass of indefinite horizon properties~\cite{Put94}. Such specifications require the agent to \emph{always} avoid certain unsafe states, denoted by $$\avoid \subseteq \states,$$ and to reach certain states, denoted by $$\reach \subseteq \states,$$ \emph{almost surely}; i.e., with probability one over an indefinite horizon. We represent these constraints as $$\varphi = \langle \reach, \avoid \rangle.$$

The \emph{avoid} specification, denoted by $\varphi_A = \langle \avoid \rangle$, requires only the avoidance of unsafe states. The relation $\pomdp(\osched) \models \varphi_A$ indicates that the agent adheres to the specification $\varphi$ under policy $\osched$. An interesting attribute of these specification is that rewards are irrelevant for computing safe sets.

For a set of states $S' \subseteq S$ in a POMDP, $\probability_\belief^\osched(S')$ denotes the probability of reaching $S'$ from the belief $\belief$ using the policy $\osched$. A policy $\osched$ is \emph{winning} for the specification $\varphi$ from belief $\belief$ in the POMDP $\pomdp$ if and only if $\probability_\belief^\osched(\avoid) = 0$ and $\probability_\belief^\osched(\reach) = 1$. This means that the policy must reach $\avoid$ with probability zero and $\reach$ with probability one (almost surely) when $\belief$ is the initial state.

A belief $\belief$ is \emph{winning} for $\varphi$ in $\pomdp$ if there exists a winning policy from $\belief$. For multiple beliefs, we define \emph{winning regions} (also known as safe or controllable regions). A \emph{winning region} in a POMDP is a set of winning beliefs, meaning that from each belief within this region, there exists a winning policy.

The purpose of a shield is to prevent the agent from taking actions that would violate a (reach-avoid) specification. For \emph{avoid specifications}, the shield ensures that the agent does not enter predetermined avoid states or states from which it is impossible to prevent reaching an avoid state in the future. Consequently, a shield ensures that an agent stays within a winning region. To remain inside this region, the agent must select actions such that all successor states (from the current belief) also remain within the winning region.

A shield also prevents the agent from encountering dead ends in reach-avoid specifications. While a shield itself cannot force an agent to visit reach states, we can ensure that the agent eventually visits the reach state under mild assumptions~\cite{DBLP:conf/cav/JungesJS20}. Formally, we define a shield as a set of (winning) policies, often referred to as a \emph{permissive} policy~\cite{DBLP:journals/corr/DragerFK0U15,DBLP:conf/tacas/Junges0DTK16}.

To compute winning policies, beliefs, and regions for qualitative reach-avoid properties, we only need to consider the finite set of belief supports~\cite{DBLP:journals/lmcs/RaskinCDH07}. Constructing a finite, albeit exponential, belief-support model that suitably abstracts the belief MDP is a prerequisite to shielding. Denote the set of all belief supports by $\belsup$. We then directly define policies on the belief support as $\osched^{\belief} \colon \belsup \rightarrow \Act$. This \emph{deterministic} policy selects a unique action for each belief support $\supp(\belief)$.

\begin{definition}[Shield] \label{def:shield}
	A permissive policy for a POMDP $\pomdp$ is given by the function $\psched\colon \belief \rightarrow 2^\Act$. 
	A policy $\osched$ is \emph{admissible} for $\psched$ if for all beliefs $\belief$ it holds that $\osched(\belief)\in \psched(\belief)$.
	A permissive policy is a \emph{$\varphi$-shield for $\pomdp$} if all its admissible policies are winning.
\end{definition}

We present the general problem for enforcing an agent to remain safe in POMDPs~\cite{DBLP:conf/cav/JungesJS20}.
\begin{problem}[Ensured safety in POMDPs]
	Given a POMDP~$\pomdp$, a safety constraint $\varphi$, and a sequence of policies employed by an agent $\osched_1,\dots,\osched_n$, ensure that for all policies~$\osched_i$ it holds that $\pomdp(\osched_i)\models\varphi$ with $1\leq i\leq n$. %
\end{problem}

Precise, accurate models of uncertain systems are expensive to obtain, requiring the collection of large amounts of data or expert domain knowledge~\cite{baier201910}. However, the system's limitations that give rise to those uncertainties are often known. This tension motivates the use of a partial model. We assume the agent only has access to a partial model $\pomdp' = (\states, \init, \Act, \ObsFun, \ObsSym, \probmdp')$ where the transition model $\probmdp'$ yields unknown but positive probabilities. Essentially, $\probmdp'$ defines a set of possible transitions. 

A POMDP $\pomdp = (\states, \init, \Act, \ObsFun, \ObsSym, \probmdp)$ and a partial model $\pomdp' = (\states, \init, \Act, \ObsFun, \ObsSym, \probmdp')$ have \emph{coinciding transitions} if and only if for all states $s, s' \in \states$ and actions $\act \in \Act$, $\probmdp(s' \mid s, \act) > 0$. The partial model defines exactly the graph of the original POMDP. Similarly, $\pomdp'$ \emph{overapproximates the transition model of $\pomdp$} if for all states $s, s' \in \states$ and actions $\act \in \Act$, $\probmdp(s' \mid s, \act) > 0$ if $\probmdp'(s' \mid s, \act) > 0$. The original POMDP has no transitions that are not present in the partial model. 

The provable guarantees produced by a shield depend on the partiality of the model. Knowing the exact set of transitions with arbitrary positive probability for a POMDP is sufficient to compute a $\varphi$-shield~\cite{DBLP:conf/aaai/Carr0JT23}.

\section{COMPOSITIONAL SHIELD SYNTHESIS}

We first outline decomposion of a POMDP $\pomdp$ into a set of submodels $\{ \pomdp_1, \dots, \pomdp_n \}$ and describe properties that make a composition admissible. Then we describe shielding and the guarantees computed on the submodels.  We present a working example in the domain \domain{Obstacle(8)} (example~\ref{ex:grid}):

\begin{example}\label{ex:grid}
	Consider the environment \domain{Obstacle(8)} (Figure~\ref{fig:Abstraction_Sets}), which is a single-agent gridworld with four initial locations and five ``dead zones'' from which the agent cannot exit.
	The agent has three observations $\left\{\text{white},\text{red},\text{green}\right\}$ where white is an unlabelled state, red is a dead zone, and green is the target state.
	At each state, the agent may select one of four actions $\left\{\textrm{left},\textrm{right},\textrm{up},\textrm{down}\right\}$, upon selecting the agent, the agent may move one or two spaces depending on the probability of slip $p=0.1$.
    In such an environment, there is a natural segmentation into smaller grids as quadrants (dashed lines of Figure~\ref{fig:Abstraction_Sets}).
 \end{example}

\begin{figure*}[t!]
    \centering
    \subcaptionbox{Environment with an overlay of the $\varphi_1$-shield.}[8cm]{
    	\resizebox{8.0cm}{!}{%
        \newcommand{\translatepoint}[1]%
{   \coordinate (mytranslation) at (#1);
}

\newcommand{\gridThreeD}[2]{
	\begin{scope}[yshift=#1,yslant=0.5,xslant=-1.4,scale=0.4]
		\draw [step=1,opacity=#2] (0.1,0.1) grid (3.9,3.9);
		\draw [dashed,opacity=#2] (0,0) rectangle (4,4);
		\draw [step=1,opacity=#2] (4.1,4.1) grid (7.9,7.9);
		\draw [dashed,opacity=#2] (4,4) rectangle (8,8);
		\draw [step=1,opacity=#2] (0.1,4.1) grid (3.9,7.9);
		\draw [dashed,opacity=#2] (0,4) rectangle (4,8);
		\draw [step=1,opacity=#2] (4.1,0.1) grid (7.9,3.9);
		\draw [dashed,opacity=#2] (4,0) rectangle (8,4);
		\fillGrid{myred,opacity=#2-0.1}{1}{1}
		\fillGrid{myred,opacity=#2-0.1}{1}{0}
		\fillGrid{myred,opacity=#2-0.1}{1}{3}
		\fillGrid{myred,opacity=#2-0.1}{6}{0}
		\fillGrid{myred,opacity=#2-0.1}{6}{6}
		\fillGrid{blue,opacity=#2-0.3}{5}{6}
		\fillGrid{blue,opacity=#2-0.3}{5}{5}
		\fillGrid{blue,opacity=#2-0.3}{4}{6}
		\fillGrid{blue,opacity=#2-0.3}{1}{2}
		\fillGrid{green,opacity=#2-0.3}{0}{0}
%

    %

	\end{scope}
}

\newcommand{\UAVLayer}[2]{
    \foreach[evaluate={\n=0.85-abs(0.02-\i)}] \i in {0,0.005,0.01,0.015,0.02,0.025,0.03,0.035,0.04}
    {
	\begin{scope}[yshift=#1,yslant=0.5,xslant=-1.4,scale=0.4]
	
        \UAVopaq{1}{2}{\n}{#2}{0.05}
        \UAVopaq{5}{5}{\n}{#2}{0.2}
        
        \UAVopaq{5}{6}{\n}{#2}{0.05}
        \UAVopaq{4}{6}{\n}{#2}{0.05}
        
	\end{scope}
	}
}

\newcommand{\drawLinewithBG}[2]
{
	\draw[white]  (#1) -- (#2);
	\draw[black,very thick,dashed] (#1) -- (#2);
}

\tikzset{%
	thick arrow/.style={
		-{Triangle[angle=120:1pt 1]},
		line width=1.5cm, 
		draw=gray
	},
	arrow label/.style={
		text=white,
		font=\sf,
		align=center
	},
	set mark/.style={
		insert path={
			node [midway, arrow label, node contents=#1]
		}
	}
}

	\input{figures/UAV_opaque}
	\input{figures/UAV}

\begin{tikzpicture}

\gridThreeD{0}{0.5}
\UAVLayer{20}{blue}

\begin{scope}[yshift=75,yslant=0.5,xslant=-1.4,scale=0.4]
		\draw [step=1] (4,4) grid (8,8);
		\fill[myyellow,opacity=0.1] (2,4) rectangle (4,8);
		\draw [step=1,opacity=0.2] (2,4) grid (4,8);
		
		\fill[myyellow,opacity=0.1] (4,2) rectangle (8,4);
		\draw [step=1,opacity=0.2] (4,2) grid (8,4);
		
		\fillGrid{myred,opacity=0.4}{6}{6}
		\fillGrid{blue,opacity=0.2}{5}{6}
		\fillGrid{blue,opacity=0.2}{5}{5}
		\fillGrid{blue,opacity=0.2}{4}{6}
%

    %

	\end{scope}

\node[fill=none,draw=none] at (-2.5,2.5) {$\varphi_1$};
\node[fill=none,draw=none] at (-3.25,5.0) {$\shield_{\varphi_1}$};
\node[fill=none,draw=none] at (2.5,2.0) {$\varphi_2$};
\node[fill=none,draw=none] at (1.25,0.1) {$\varphi_3$};
\node[fill=none,draw=none] at (-3.5,0.5) {$\varphi_4$};
\end{tikzpicture}

    	}
    }\hfill
    \subcaptionbox{$\varphi_{2,3,4}$-shields\label{fig:Abstraction_Sets_sub}}[3cm]{
        \centering
            \begin{tikzpicture}[scale=0.25]

\draw [step=1] (0,0) grid (4,4);
\fill[myyellow,opacity=0.1] (-2,0) rectangle (0,4);
\draw [step=1,opacity=0.2] (-2,0) grid (0,4);
		
\fill[myyellow,opacity=0.1] (0,-2) rectangle (4,0);
\draw [step=1,opacity=0.2] (0,-2) grid (4,0);

\fillGrid{red,opacity=0.2}{3}{2}
\node[draw=none,fill=none] at (-4.75,3.25) {$\shield_{\varphi_2}$};

\end{tikzpicture}
            \vspace{0.25cm}
            \begin{tikzpicture}[scale=0.25]

\draw [step=1] (0,0) grid (4,4);
\fill[myyellow,opacity=0.1] (0,4) rectangle (-2,0);
\draw [step=1,opacity=0.2] (0,4) grid (-2,0);
		
\fill[myyellow,opacity=0.1] (4,4) rectangle (0,6);
\draw [step=1,opacity=0.2] (4,4) grid (0,6);
\fillGrid{red,opacity=0.2}{0}{1}
\fillGrid{blue,opacity=0.2}{1}{1}
\fillGrid{red,opacity=0.2}{2}{1}
\fillGrid{red,opacity=0.2}{3}{1}

\node[draw=none,fill=none] at (-4.75,5.25) {$\shield_{\varphi_3}$};
\end{tikzpicture}
            \vspace{0.25cm}
            \begin{tikzpicture}[scale=0.25]

\draw [step=1] (0,0) grid (4,4);
\fill[myyellow,opacity=0.1] (0,4) rectangle (4,6);
\draw [step=1,opacity=0.2] (0,4) grid (4,6);
		
\fill[myyellow,opacity=0.1] (4,0) rectangle (6,4);
\draw [step=1,opacity=0.2] (4,0) grid (6,4);
\node[draw=none,fill=none] at (-2.75,5.25) {$\shield_{\varphi_4}$};
\end{tikzpicture}
    }\hfill
    \subcaptionbox{Submodel composition\label{fig:Abstraction_Sets_super}}[4cm]{
        \begin{tikzpicture}[scale=1.5, state/.append style={minimum size=2mm,inner sep=3pt},>=stealth,
        bobbel/.style={minimum size=2.8mm,inner sep=0pt,fill=black,circle},bobbelblack/.style={minimum size=2.0mm,inner sep=0pt,fill=black,circle}]

\node[state,fill=none] (s0) at (0,1) {$\pomdp_1$};
\node[state,fill=none] (s1) at (1,1) {$\pomdp_2$};
\node[state,fill=none] (s2) at (0,0) {$\pomdp_3$};
\node[state,fill=none] (s3) at (1,0) {$\pomdp_4$};

\draw[->,very thick,blue] (s0) to [bend left] (s1);
\draw[->,very thick,black] (s1) to [bend left] (s0);
\draw[->,very thick,blue] (s1) to [bend left] (s3);
\draw[->,very thick,black] (s3) to [bend left]  (s1);

\draw[->,very thick,blue] (s0) to [bend right] (s2);
\draw[->,very thick,black] (s2) to [bend right]  (s0);

\draw[->,very thick,blue] (s2) to [bend right]  (s3);
\draw[->,very thick,black] (s3) to [bend right]  (s2);

\end{tikzpicture}
    }\hfill   
    
    \caption{\domain{Obstacle(8)} equipped with a compositional shield, comprising four sub-shields that cover the entire state space using distinct sets of states and form connections using the set of potential successor states (highlighted in yellow). The reach-avoid specification $\varphi$ can be composed as $\bigcup_{i \in \lbrace 1 \dots 4 \rbrace} \varphi_i$, where each $\varphi_i$ represents the reach-avoid specification applied to its distinct set of states. The connected submodels form a composition graph, which helps us to define winning regions when the reach states are outside of the submodel's set of states (section~\ref{sec:reach-avoid-shield}).}
    \label{fig:Abstraction_Sets}
\end{figure*}

\subsection{POMDP (De)composition}

When decomposing the POMDP model $\pomdp$ into submodels $\{ \pomdp_1,\dots, \pomdp_n \}$, there are multiple approaches to achieving a compositional abstraction. Some  examples include:
\begin{itemize}
    \item distinguishing the location of an agent from the location of a moving obstacle,
    \item splitting a large area into smaller areas, and
    \item separating the tracking of the fuel reserves from the avoiding of collisions.
\end{itemize}
While each corresponds to different features in the problem formulation, all decompositions have the following in common: The subproblems are all performed on smaller state spaces than their precursor models. After decomposing and computing the behavior induced within the submodels, we must preserve some properties to compose.

\begin{definition}[State-based POMDP decomposition]\label{def:decom}
Let \( \mathcal{M} = (\states, \init, \Act, \ObsFun, \ObsSym, \probmdp, \rew) \) be a POMDP.
Given a subset of states \( \states_i \subseteq \states \), we define the corresponding \emph{submodel POMDP} as
\[
\mathcal{M}_i = (\hat{\states}_i, \init_i, \Act, \ObsFun_i, \ObsSym, \probmdp_i, \rew_i),
\]
where:
\begin{itemize}
    \item The submodel state space is \( \hat{\states}_i = \states_i \cup \successor(\states_i) \), where
    \[
    \successor(\states_i) = \left\{ s' \in \states \mid \exists s \in \states_i,\ a \in \Act,\ \probmdp(s' \mid s, a) > 0 \right\}.
    \]

    \item The transition function \( \probmdp_i \) is defined as:
    \[
    \probmdp_i(s' \mid s, a) =
    \begin{cases}
        \probmdp(s' \mid s, a) & \text{if } s \in \states_i \text{ and } s' \in \hat{\states}_i, \\
        1 & \text{if } s \in \hat{\states}_i \setminus \states_i \text{ and } s' = s, \\
        0 & \text{otherwise}.
    \end{cases}
    \]
    That is, states in \( \hat{\states}_i \setminus \states_i \) are treated as \emph{absorbing} states with self-loops for all actions.
    We emphasize that while interface states \( \hat{\states}_i \setminus \states_i \) are included to model successors and enable belief tracking, they are not controllable in \( \mathcal{M}_i \) and are never initial. These “absorbing” interface states are a conservative overapproximation to handle belief updates at submodel boundaries and capture the condition when the agent leaves the domain of that submodel. They serve solely to close the dynamics of a given submodel upon exit (guaranteed local safety) and preserve successor structure without underestimating risk transiting from one submodel to another.

    \item The observation function \( \ObsFun_i \) is the restriction of \( \ObsFun \) to \( \hat{\states}_i \), i.e.,
    \[
    \ObsFun_i(z \mid s) = \ObsFun(z \mid s) \quad \text{for all } s \in \hat{\states}_i,\ z \in \ObsSym.
    \]

    \item The reward function \( \rew_i \) is the restriction of \( \rew \) to core states:
    \[
    \rew_i(s, a) =
    \begin{cases}
        \rew(s, a) & \text{if } s \in \states_i, \\
        0 & \text{otherwise}.
    \end{cases}
    \]

    \item The initial distribution \( \init_i \) must satisfy:
    \[
    \supp(\init) \cap \states_i \subseteq \supp(\init_i) \, \text{and} \, \init_i(s) = 0 \text{ for all } s \in \hat{\states}_i \setminus \states_i.
    \]
\end{itemize}

\end{definition}

\begin{continuance}{\ref{ex:grid}}
    Returning to Example~\ref{ex:grid}, for an agent that may be at the top-right corner (state $s \in \states_1$) of submodel $\pomdp_1$. The action $\textrm{right}$ would lead to $s'\in   \hat{\states}_i \setminus \states_i$. It is important that this successor $s$ is included in the states modelled by the submodel $\pomdp_1$ as we want to ensure a safe transition out of the submodel. By then making this state $s'$ absorbing, we reflect that $\pomdp_1$ does not have control over what happens after the agent $s'$, which is the responsibility of submodel $\pomdp_2$. This design ensures that each submodel operates only within its assigned domain and handles transitions conservatively at the edges. The absorbing behavior prevents circular dependencies between submodels and makes shield computation tractable by enforcing a clear boundary.
\end{continuance}

We describe the conditions on the initial distribution $\init_i$ (definition~\ref{def:admissible}) but at a minimum we preserve any initial states in $\init(s)$ to the submodel initial states $I_i(s)$ by restricting the domain to $\states_i$.

\begin{definition}[Admissible composition]\label{def:admissible}
A set of submodels \( \{ \pomdp_1, \dots, \pomdp_n \} \) is an \emph{admissible composition} of a POMDP \( \pomdp = (\states, \init, \Act, \ObsFun, \ObsSym, \probmdp, \rew) \) if the following conditions hold:
\begin{enumerate}
    \item[P1] State cover: Every state in \( \states \) is included in at least one submodel, i.e.,
    \[
    \states = \bigcup_{i=1}^n \states_i,
    \]
    where \( \states_i \) is the set of core states in submodel \( \pomdp_i \).

    \item[P2] Initialization consistency: For each submodel \( \pomdp_i \) with initial distribution \( \init_i \), the support of the global initial distribution \( \init \) restricted to \( \states_i \) must be contained in the support of \( \init_i \), i.e.,
    \[
    \supp(\init) \cap \states_i \subseteq \supp(\init_i).
    \]
\end{enumerate}
\end{definition}
To ensure that all transitions coincide we must include the successor states $\successor(\states_i) \subseteq \states$. However, it is important to distinguish that while submodel $\pomdp_i$ may contain states $s\in \successor(\states_i ) \backslash \states_i$ that are $s\notin \states_i $ we do not use the submodel $\pomdp_i$ to model the behavior in these states. We use these states to describe endpoints for the model $\pomdp_i$.
In this work, unless otherwise described, we refer to set of states inside of model $\pomdp_i$ as $\states_i$ rather the $\hat{\states_i}$ that they contain.

A set of submodels may have overlapping states. A submodel may, also, include more states in their initial distribution $I_i$ compared to the original model $\pomdp$, i.e.,~$|\supp(I(s))| < |\supp(I_i(s))|$, to allow for an overapproximated composition.

\subsection{Compositional Avoid Shield}

To form a composition, we consider the features of the problem and attempt to capitalize on symmetries. In Example~\ref{ex:grid}, a standard gridworld,
While state estimation (tracking of the belief support $\belsup$, see right-hand side of Figure~\ref{fig:synthesis_outline}) may still be conducted on the full model $\pomdp_i$, we require an interface for estimating the belief support inside the submodels to compute a shield.

\begin{definition}[Submodel state estimator]\label{def:submodel-estimator}
Let \( \pomdp_i = (\hat{\states}_i, \init_i, \Act, \ObsFun_i, \ObsSym, \probmdp_i, \rew_i) \) be a submodel defined over core state set \( \states_i \subseteq \states \) and extended state set \( \hat{\states}_i \supseteq \states_i \).

Let \( \belief_i \colon \ObsSym \times \Act \times \ObsSym \rightarrow \Delta(\states_i) \) denote the submodel belief distribution over \( \states_i \), and let the corresponding \emph{belief support} be defined as:
\[
\belsup_i = \supp(\belief_i) \subseteq \states_i.
\]
Let \( \belsups_i \) denote the set of all possible belief supports over \( \states_i \).

Then, a \emph{submodel state estimator} is a function
\[
\estimator_i \colon (\ObsSym \times \Act)^* \rightarrow \belsups_i
\]
that, given a sequence of observation–action pairs, computes the updated belief support \( \belsup_i \) over \( \states_i \), using the transition relation \( \probmdp_i \) and observation function \( \ObsFun_i \).

Specifically, \( \estimator_i \) maintains a sound overapproximation of reachable states in \( \states_i \) given the agent’s history \( h \in (\ObsSym \times \Act)^* \), based on the POMDP graph structure of the submodel.
\end{definition}

\begin{figure*}
\centering
		\newcommand*{\connectorH}[4][]{
  \draw[#1] (#3) -| ($(#3) !#2! (#4)$) |- (#4);
}
\newcommand*{\connectorV}[4][]{
  \draw[#1] (#3) |- ($(#3)+(0,#2)$) -| (#4);
}

\newcommand{\splitdistance}{1.65cm}
\begin{tikzpicture}
    \node[draw, minimum width=3.0cm,text width=3.0cm,rounded corners=0.05cm,align=center] (pomdp) { Partial model $\pomdp'$ \&\\[-0.2em] Specification $\varphi$};
    \node[below=1.0cm of pomdp,draw, minimum width=2.75cm,text width=2.75cm,rounded corners=0.05cm,align=center] (feature) { Feature splitting, model extension};
    \node[below=1.5cm of feature,xshift=-3*\splitdistance,,draw, minimum width=2.85cm,text width=2.85cm,inner sep=0.05cm,rounded corners=0.05cm,align=center] (subone) { POMDP $\pomdp_1'$ \&\\[-0.2em] Specification $\varphi_1$};
    \node[below=1.5cm of feature,xshift=-1*\splitdistance,draw,minimum width=2.85cm,text width=2.85cm,inner sep=0.05cm,rounded corners=0.05cm,align=center] (subtwo) { POMDP $\pomdp_2'$ \& \\[-0.2em]Specification $\varphi_2$};
     \node[below=1.5cm of feature,xshift=2*\splitdistance,draw,minimum width=2.75cm,text width=2.75cm,inner sep=0.05cm,rounded corners=0.05cm,align=center] (subn) { POMDP $\pomdp_{n}'$ \&\\[-0.2em] Specification $\varphi_{n}$};
     \node[below=1.5cm of feature,xshift=0.5*\splitdistance](location){\Huge$\cdots$};
    \node[below=0.55cm of subone,rounded corners=0.05cm,align=center,text width=1.7cm] (sched1) {Permissive\\[-0.2em] policy: $\psched_1$};
    \node[below=0.55cm of subtwo,rounded corners=0.05cm,align=center,text width=1.7cm] (sched2) {Permissive\\[-0.2em] policy: $\psched_2$};
    \node[below=0.55cm of subn,rounded corners=0.05cm,align=center,text width=1.85cm] (schedn) {Permissive\\[-0.2em] policy: $\psched_{n}$};
    \node[draw, fit=(sched1)(schedn), minimum height=1.5cm, inner sep=2pt,dashed,rounded corners=0.05cm,label=192:{Sub-shields}] (outline) {};
    \node[right=2*\splitdistance of pomdp,rounded corners=0.05cm,align=center] (stateest) {State estimator: $\estimator$};
    \node[above=0.5cm of stateest,rounded corners=0.05cm,text width=3.2cm] (obsin) {Observation $\obs\in \ObsSym$\\[-0.2em] \& Action $a\in\Act$};
    
    \node[below=6cm of feature,text width=3cm,minimum width=2.75cm,text width=2.75cm,rounded corners=0.05cm,align=center,draw](allowact){Allowed actions\\[-0.2em] $\Act'\subseteq \Act$};

    \draw[thick,->] (pomdp) --node[right,text width=3.2cm]{Create sub-spaces} (feature);
    \draw[thick,->] (feature.south) -- (subone.north);
    \draw[thick,->] (feature.south) -- (subtwo.north);
    \draw[thick,->] (feature.south) -- (subn.north);
    \draw[thick,->] (subone) -- (sched1.north);
    \draw[thick,->] (subtwo) -- (sched2.north);
    \draw[thick,->] (subn) -- (schedn.north);
    
    \draw[thick,->] (pomdp) --node[above,text width=2.75cm]{} (stateest.west);
    \draw[thick,->] (obsin) --node[above,text width=2.75cm]{} (stateest.north);
    \draw[very thick,->] (pomdp)++(0,1.5) --node[above,text width=2.75cm]{} (pomdp.north);
    \draw[thick,->] (stateest) |-node[left,text width=2.25cm,yshift=3.5cm,xshift=.5cm]{Belief \\[-0.2em] support $\belsup$} (outline.east);
    \draw[thick,->] (sched1) -- node[left, xshift=-.25cm]{$\Act_1'$} (allowact);
    \draw[thick,->] (sched2) -- node[left]{$\Act_2'$} (allowact);
    \draw[thick,->] (schedn) -- node[left, xshift=-.25cm]{$\Act_{n}'$} (allowact);
 	\draw[very thick,->] (allowact) -- ($(allowact.east)+(1.0,0)$);

\end{tikzpicture}
	\caption{Feature-based compositional shield synthesis.}
	\label{fig:synthesis_outline}
\end{figure*}

\paragraph{Consistency of belief support.}
We assume that the agent interacts with the environment by producing a trace of observations and actions. Let \( h = (z_0, a_0, z_1, a_1, \dots, z_t) \in (\ObsSym \times \Act)^* \) denote such a trace. The \emph{belief support} \( \belsup \subseteq \states \) of the global POMDP \( \pomdp \) at time \( t \) is computed using a \emph{state estimator} \( \estimator \colon (\ObsSym \times \Act)^* \to 2^{\states} \), which maps the agent’s trace to a set of states consistent with the observations and transitions in \( \pomdp \). That is,
\[
\belsup = \estimator(h).
\]

For each submodel \( \pomdp_i \), we define a corresponding \emph{submodel state estimator} \[\estimator_i \colon (\ObsSym \times \Act)^* \to 2^{\states_i}, \] which tracks the belief support \( \belsup_i = \estimator_i(h) \subseteq \states_i \) based only on the submodel's transition graph and observation function.

The \emph{compositional belief support} is then given by the union of the submodel belief supports:
\[
\hat{\belsup} = \bigcup_{i=1}^n \belsup_i.
\]

We say that the belief support is \emph{consistent under composition} if the global estimator and the union of submodel estimators yield the same result, i.e.,
\[
\estimator(h) = \bigcup_{i=1}^n \estimator_i(h) \quad \text{for all } h \in (\ObsSym \times \Act)^*.
\]

Under this consistency condition, belief updates also align: if \( a \in \Act \) and \( z \in \ObsSym \) are the latest action and observation, then
\[
\belsup' = \estimator(h \cdot (a, z)) = \bigcup_{i=1}^n \estimator_i(h \cdot (a, z)) = \hat{\belsup}'.
\]

There are two sufficient conditions for consistency, both of which come by construction in Definitions~\ref{def:decom}-\ref{def:submodel-estimator}: 1.) a shared trace input, where all submodels process a common trace h and 2.) consistent graph interfaces, i.e. the transition and observation functions used by $\estimator_i$ are faithful restrictions or projections of the global model. 
Importantly, belief consistency does not require exact agreement on belief distributions across submodels.  Because all estimators process the same trace $h$, and use deterministic updates based on the known  graph-structure of their respective submodels, the union of their outputs naturally conserves the reachable set as computed globally.

\paragraph{Compositional avoid procedure.}
Given a global avoid specification \( \varphi' = \langle \avoid \rangle \), each submodel \( \pomdp_i \) defined over core states \( \states_i \) and extended state space \( \hat{\states}_i \) inherits a local avoid specification \( \varphi'_i = \langle \avoid_i \rangle \), where
\[
\avoid_i = \avoid \cap \hat{\states}_i.
\]
That is, all avoid states from the global specification that are present in submodel \( \pomdp_i \)---whether in its core or interface states---must also be treated as avoid in the submodel specification. 
This ensures that any shield synthesized for \( \pomdp_i \) prevents transitions into states that violate the global avoid specification, maintaining soundness under composition.

\begin{theorem}[Compositional $\avoid$-shield]\label{thm:comp-avoid-shield}
Let $\pomdp'=(\states,\init,\Act,\ObsFun,\ObsSym,\probmdp',\rew)$ be a POMDP with avoid set $\avoid\subseteq\states$.  
Let $\{\pomdp'_1,\dots,\pomdp'_n\}$ be submodels forming an admissible composition of $\pomdp'$ (Definition~\ref{def:admissible}).  
For each $i \in \{1,\dots,n\}$, define
$\avoid_i = \avoid \cap \hat{\states}_i$,
$\varphi'_i = \langle \avoid_i \rangle$,
and let $\psched_i$ be a $\varphi'_i$-shield for $\pomdp'_i$ (i.e., no trace allowed by $\psched_i$ in $\pomdp'_i$ can visit a state in $\avoid_i$).  
Define the global shield
\[
\psched(b) = \bigcap_{i=1}^n \psched_i\bigl(\belsup_i(b)\bigr),
\]
where $\belsup_i(b)$ is the submodel-$i$ belief support computed by the submodel state estimator $\estimator_i$ (Definition~\ref{def:submodel-estimator}).  
Then $\psched$ is a $\varphi'$-shield for $\pomdp'$: no trace allowed by $\psched$ in $\pomdp'$ can ever visit a state in $\avoid$.
\end{theorem}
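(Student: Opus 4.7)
The plan is to prove that no trace produced by $\pomdp'$ under any policy admissible for $\psched$ can visit $\avoid$, by strengthening the statement to a belief-level invariant: the global belief support $\belsup_t$ tracked by $\estimator$ never intersects $\avoid$ during execution. This suffices because the state actually occupied at every time is contained in $\belsup_t$. The reduction to a belief-level argument is enabled by two ingredients already established in the paper: the compositional definition $\psched(\belief) = \bigcap_i \psched_i(\belsup_i(\belief))$ and the belief consistency identity $\estimator(h) = \bigcup_i \estimator_i(h)$ stated immediately above the theorem.

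I would carry out the proof by induction on the length of the shared observation-action history, maintaining as invariant that for each submodel $i$ the sub-belief $\belsup_{i,t} = \estimator_i(h)$ lies in the winning region of $\psched_i$. For the base case, initialization consistency (P2) guarantees $\supp(\init) \cap \states_i \subseteq \supp(\init_i)$, so $\belsup_{i,0}$ is exactly the submodel-$i$ initial support, assumed to be a winning belief for $\psched_i$ (otherwise $\psched_i$, and hence the compositional $\psched$, returns $\emptyset$ at the root, making the theorem vacuous). For the inductive step, any action played by an admissible policy of $\psched$ lies in $\psched_i(\belsup_{i,t})$ for every $i$ by the intersection definition, so the soundness of each $\psched_i$ as a $\varphi'_i$-shield keeps the updated sub-belief $\belsup_{i,t+1}$ inside the winning region, preserving the invariant.

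To finish, I would combine this invariant with two structural facts. First, state cover (P1) together with $\avoid_i = \avoid \cap \hat{\states}_i$ yields $\avoid = \bigcup_i \avoid_i$, because every avoid state lies in some $\states_i \subseteq \hat{\states}_i$. Second, since $\belsup_{i,t} \subseteq \hat{\states}_i$, a winning sub-belief satisfies $\belsup_{i,t} \cap \avoid = \belsup_{i,t} \cap \avoid_i = \emptyset$. Unioning over $i$ and invoking belief consistency gives $\belsup_t = \bigcup_i \belsup_{i,t}$ and hence $\belsup_t \cap \avoid = \emptyset$, closing the induction and the proof.

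The main obstacle is rigorously handling the absorbing interface states $\hat{\states}_i \setminus \states_i$ from Definition~\ref{def:decom}: one worries that a critical transition $s \to s'$ with $s' \in \avoid$ could fall between submodels if $s$ is merely an interface state of every $\pomdp'_i$ that contains $s'$. State cover (P1) resolves this, since there must exist some $j$ with $s \in \states_j$ as a core state, and by construction $s' \in \successor(\states_j) \subseteq \hat{\states}_j$, placing $s'$ in $\avoid_j$. The hazard is therefore visible to $\psched_j$ and excluded from the intersection $\psched$, which is exactly what makes the compositional shield sound.
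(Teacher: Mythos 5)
Your proposal is correct and follows essentially the same route as the paper's proof: induction on the length of the shared observation--action trace, using the intersection structure of $\psched$, soundness of each local $\varphi'_i$-shield, state cover (P1) with $\avoid_i = \avoid \cap \hat{\states}_i$, initialization consistency (P2) for the base case, and the belief-support consistency identity $\estimator(h) = \bigcup_i \estimator_i(h)$. Your treatment is in fact slightly more careful than the paper's at two points — making the base case non-vacuity assumption explicit and addressing the absorbing interface-state subtlety by locating a submodel in which the predecessor is a core state (which, via the consistency identity, is one whose estimator actually contains that predecessor) — but these are refinements of the same argument rather than a different proof.
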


\begin{proof}
We prove the claim by induction on the length $t$ of the observation–action trace
\[
h_t = (z_0, a_0, z_1, \dots, a_{t-1}, z_t).
\]

\paragraph{Base case \((t = 0)\)}
The agent starts in some state $s_0 \sim \init$.  
By admissible composition (Definition~\ref{def:admissible}, property P2), every $s_0 \in \supp(\init)$ lies in at least one submodel $\pomdp'_i$ and hence in $\states_i \subseteq \hat{\states}_i$.  
Since $\avoid_i = \avoid \cap \hat{\states}_i$, we know $s_0 \notin \avoid_i$ and hence $s_0 \notin \avoid$.  
Thus the base case holds.

\paragraph{Inductive step}
Assume that after trace $h_t$ of length $t$, the agent's belief support $\belsup(h_t) \subseteq \states$ satisfies $\belsup(h_t) \cap \avoid = \emptyset$.

Let $a_t$ be an action allowed by the shield $\psched$, i.e.,
\[
a_t \in \psched\bigl(\belsup(h_t)\bigr) = \bigcap_{i=1}^n \psched_i\bigl(\belsup_i(h_t)\bigr),
\]
where $\belsup_i(h_t) = \estimator_i(h_t)$ is the belief support tracked within submodel $\pomdp'_i$ using its own estimator.

Since $a_t$ is allowed by every submodel shield on its local belief support, and each $\psched_i$ is a $\varphi'_i$-shield, it follows that for all $s \in \belsup_i(h_t)$, no successor state
\[
s' \in \successor(s, a_t) \cap \avoid_i
\]
is reachable in $\pomdp'_i$.

By admissible composition (Definition~\ref{def:admissible}, property P1) and the construction of each submodel (Definition~7), every successor state 
\[
s' \in \successor(\belsup(h_t), a_t)
\]
in the global model $\pomdp'$ lies in at least one $\hat{\states}_i$ and hence is subject to some local shield that prevents entry into $\avoid_i$.  
Because $\avoid_i = \avoid \cap \hat{\states}_i$, we conclude $s' \notin \avoid$.  
Thus the updated belief support $\belsup(h_{t+1}) = \estimator(h_{t+1})$ satisfies
\[
\belsup(h_{t+1}) \cap \avoid = \emptyset.
\]
By induction, no trace permitted by the global shield $\psched$ can reach a state in $\avoid$.

\paragraph{Only if}
If the set $\{\pomdp'_1, \dots, \pomdp'_n\}$ does not form an admissible composition of $\pomdp'$, then either some state $s \in \states$ is not covered by any $\states_i$ (violating~P1),~or
some $s \in \supp(\init)$ is not included in any $\supp(\init_i)$~(violating P2),
in which case no set of local shields can guarantee global avoidance of $s \in \avoid$; i.e.,  shielding is unsound or incomplete.
\end{proof}

\vspace{.5em}
\subsection{Compositional Reach-Avoid Shield}
\label{sec:reach-avoid-shield}
While the avoid shield from the previous section restricts actions that could enter $\avoid$ states, it is necessary to construct shields that also allow reaching the target.

\begin{definition}[Winning submodel]
Let $\pomdp = (\states, \init, \Act, \ObsFun, \ObsSym, \probmdp, \rew)$ be a POMDP with a reach set $\reach \subseteq \states$. A submodel $\pomdp_i$ with core state set $\states_i$ is called a \emph{winning submodel} if
\(
\reach \cap \states_i \neq \emptyset.
\)
\end{definition}

\paragraph{State space segmentation and connectors} 
We examine the possible transitions between the subsets of states for each submodel to determine which submodels are connected. Submodel $\pomdp_i$ connects to submodel $\pomdp_j$ if and only if there exists a state $s \in \states_j$ that is a successor state of $s\in \successor(\states_i)$. We define the connection graph as a directed graph formed by connected submodels. In the presented example, one may move from the top-left quadrant to the top-right or the bottom-left but not the bottom right (see Figure~\ref{fig:Abstraction_Sets_super} for the connection graph of the submodels). We must form this connection graph to give the submodels a $\reach$ specification. Then we form a path from the directed graph created by the connected submodels (dark gray lines in Figure~\ref{fig:Abstraction_Sets_super}).

\paragraph{Compositional reach-avoid procedure}
Consider the non-covered successor states in submodel $\successor(\states_i) \backslash \states_i$, if connected submodel $\pomdp_j$ has a direct path to a winning submodel. Then we label $\reach_i = (\successor(\states_i)\backslash\states_i)\bigcap \states_j \subseteq \hat{\states_i}$.
Any $\avoid \subseteq S$ state will remain an $\avoid_i \subseteq S_i$ in the submodel $\pomdp_i$.  
The reach-avoid specification for submodel $\pomdp_i$ is then given by $\varphi_i = \langle \reach_i,\avoid_i \rangle$.

\paragraph{Reachable spaces}
In example~\ref{ex:grid}, it is insufficient for the reach-avoid shield to synthesize a strategy for the states within the quadrants. For instance, if we consider the set of states inside the dashed lines of the top-left quadrant (covered by $\shield_{\varphi_1}$ in Figure~\ref{fig:Abstraction_Sets}), there is no winning state for the agent to enforce a reach-avoid specification. For the spaces with a subset of states without a winning condition, we modify the winning region to move the agent outside these spaces. This modification includes adding all possible successor states into the set of states covered by the shield.

\paragraph{Initial belief support}
Similarly, if there is no initial state within the set of states, such as the top-right quadrant ($\shield_{\varphi_2}$ in Figure~\ref{fig:Abstraction_Sets}), then there is no input for the synthesis approach to tracking belief support. To initialize such belief support, we overapproximate the input belief support for $\pomdp_2$. We extend the state space of the submodel to include any precursor states for the region (yellow states in figure~\ref{fig:Abstraction_Sets_sub}).
We then include any states from precursor models (backward along the dark gray line in Figure~\ref{fig:Abstraction_Sets_super}) into the initial distribution $I_2(s)$ of the submodel $\pomdp_2$. This modification makes it possible to synthesize a shield from this overapproximation on the belief support.

\begin{algorithm}[!p]
\caption{Compositional Shield Synthesis for POMDPs} \label{alg:synthesis}
\begin{algorithmic}[1]
\Require For POMDP $\pomdp$, a partial model of the structure  $\pomdp' =  (\states,\init,\Act, \ObsFun,\ObsSym, \probmdp')$
\Require Safety specification $\varphi = \langle \text{REACH}, \text{AVOID} \rangle$
\Require Decomposition strategy \Comment{e.g., feature- or location-based}
\Ensure Global shield $\psched$ enforcing $\varphi$ across $\pomdp$

\State \textbf{// Step 1: Decompose Global Model}
\State Construct a set of submodels $\{\pomdp_1', \dots, \pomdp_n'\}$ with core states $S_i$ and extended states $\hat{\states}_i = \states_i \cup \successor(\states_i)$
\For{each submodel $\pomdp_i$}
    \State $\pomdp_i' \gets ( \hat{\states}_i, \init_i, \Act, \ObsFun_i, \ObsSym, \probmdp_i')$
    \State $\probmdp_i(s,a,s)' \gets 1 \quad \forall a \in \Act, \forall s \in \hat{S}_i \setminus S_i$ \Comment{Treat all interface states as absorbing (see Definition~\ref{def:decom})}
\EndFor

\State \textbf{// Step 2: Assign Local Specifications}
\For{each $\pomdp_i'$}
    \State $\avoid_i \gets \avoid \cap \hat{\states}_i$
    \If{$\reach \cap S_i \neq \emptyset$}
        \State $\reach_i \gets \reach \cap \states_i$
    \Else
        \State $\reach_i \gets \successor(\states_i) $ \Comment{use connection graph to assign proxy through successors}
    \EndIf
    \State $\varphi_i \gets \langle \reach_i, \avoid_i \rangle$ \Comment{Define local spec}
\EndFor

\State \textbf{// Step 3: Compute Local Shields}
\For{each $\pomdp_i'$}
    \State $\psched_i \gets \textrm{compute}(\pomdp_i',\varphi_i)$ \Comment{Synthesize local permissive policy (shield) \cite{DBLP:conf/cav/JungesJS20}}
\EndFor

\State \textbf{// Step 4: Compose Global Shield}
\Function{GlobalShield}{$h$}
    \For{each $i \in \{1,\dots,n\}$}
        \State $\belsup_i \gets \estimator_i(h)$ \Comment{Use submodel estimator on current trace}
        \State $\Act_i \gets \psched_i(\belsup_i)$ \Comment{Actions allowed by shield $i$}
    \EndFor
    \State \Return $\psched(h) = \bigcap_{i=1}^n \act_i$ \Comment{Intersection of local shields}
\EndFunction

\State \textbf{// RL Integration}
\State Use $\psched(h)$ to mask actions in RL training (see Figure~\ref{fig:RL_Flowchart})

\end{algorithmic}
\end{algorithm}

\paragraph{Shield synthesis}
We can now compute shields for each submodel. This synthesis procedure can be performed in parallel (Figure~\ref{fig:synthesis_outline} and Algorithm~\ref{alg:synthesis}). For the partial submodel $\pomdp_i'$ and reach-avoid specification $\varphi_i$, we can compute the permissive policy $\psched_i$, which ensures that an agent with belief support $\belsup_i$ will almost-surely satisfy $\varphi_i$.

\begin{figure*}[!t]
\hspace{-0.75cm}
\centering
\begin{tikzpicture}
\node[draw, minimum width=2cm,fill=black!10] (agent) { Agent};
\node[right=6cm of agent, draw, minimum width=2cm,fill=black!10] (env) { Environment};

\node[very thick, draw, below=1.75cm of agent,xshift=9em,text width=1.75cm,align=center] (filter) {State \\[-0.2em] estimator $\estimator$};
\node[very thick, draw, below=0.85cm of filter,minimum width=1.5cm,opacity=0.45,xshift=0.5cm,yshift=-0.85cm] (shieldn) {Shield $\psched_n$};
\node[very thick, draw, below=0.85cm of filter,minimum width=1.5cm,text opacity=0.45,xshift=0.25cm,yshift=-0.4cm,fill=white,draw opacity=0.45] (shieldi) {Shield $\psched_i$};
\node[very thick, draw, below=0.85cm of filter,minimum width=1.5cm,fill=white] (shield) {Shield $\psched_1$};

\draw[->] (agent.north) |- ($(agent.north)+(0,0.4)$)  -| node[above,pos=0.25] {act $\act$} (env.north);
\draw[->] (env.south) |- ($(env.south)+(0,-0.75)$)  -| node[pos=0.20,above=-0.0cm,text width=1.75cm] {obs $\obs$ \& \\[-0.2em] rew $\rew$} ($(agent.south)+(0.35,0)$);

\node[right=1cm of env.0,ellipse,draw,yshift=-2.5em,inner sep=2.0pt] (model) { Model $\pomdp$};

\node[ellipse,below=.85cm of model,draw,text width=1.75cm,align=center,inner sep=0.5pt,minimum width=2.5cm] (graph) { Partial\\[-0.2em] model $\pomdp'$};

\draw[->,dotted, thick] (env.0) -| node[above=-.25cm,text width=1.0cm,align = right] { described \\[-0.2em] by} (model.north);
\draw[->,dotted, thick] (model.south) -| node[right=0.0cm,pos=0.5,text width=2.5cm,yshift=-0.65em] {abstract} (graph.north);

\node[ellipse,right=1.55cm of shieldn.352,draw,inner sep=0.5pt,text opacity=0.45,xshift=0.25cm,fill=white,draw opacity=0.45] (specn) {Safety spec $\varphi_n$};

\node[ellipse,right=1.55cm of shieldi.352,draw,inner sep=0.5pt,text opacity=0.45,xshift=0.25cm,fill=white,draw opacity=0.45] (speci) {Safety spec $\varphi_i$};

\node[ellipse,right=1.55cm of shield.352,draw,inner sep=0.5pt,fill=white] (spec) {Safety spec $\varphi_1$};

\draw[->] ($(filter.south) + (0,-0.25)$)  -| node[pos=0.25,left,text width=1.5cm,xshift=0.2cm,yshift=1.5cm,font=] {belief\\[-0.2em] support $\belsup$} ($(agent.south)+(00.0,0)$);
\draw[thick, dashed,->,draw opacity=0.6] (spec) -- node[below,pos=0.27] {} (shield.352);

\draw[thick, dashed,->,draw opacity=0.6] (speci) -- node[below,pos=0.27] {} (shieldi.352);
\draw[thick, dashed,->,draw opacity=0.6] (specn) -- node[below,pos=0.27] {} (shieldn.352);

\draw[thick, dotted,->] (graph) -- node[pos=0.5,below]{} ($(filter.0)+(0.0,0.0)$);

\draw[->] ($(env.south)+(0.35,0)$) |- node [right=0.05cm,pos=0.1,text width=1.5cm,yshift=-0.90cm,xshift=-1.6cm] {obs $\obs$  \& \\[-0.2em] act $\act$}($(filter.east)+(0.0,0.25)$);
\draw[->] (filter) -- (shield);

\node[circle,draw,minimum width=0.20cm,label=right:decompose] (decompose) at ($(graph.west) + (-0.35,-0.85)$){};
\draw[thick, dotted,->] ($(graph.west)$) -| node[xshift=-0.55cm,yshift=-0.20cm]{builds}  (decompose.90);

\draw[thick, dotted,->] ($(graph.west)$) -| node[xshift=-0.55cm,yshift=-0.20cm]{builds}  (decompose.90);
\draw[thick, dotted,->] (decompose.270) |- node[xshift=-0.47cm,yshift=-0.15cm]{}  (shield.10);
\draw[thick, dotted,->] (decompose.270) |- node[xshift=-0.47cm,yshift=-0.15cm]{}  (shieldi.10);
\draw[thick, dotted,->] (decompose.270) |- node[xshift=-0.47cm,yshift=-0.15cm]{}  (shieldn.10);

\node[circle,draw,minimum width=0.10cm,inner sep=0.05cm] (actions) at ($(shield.west) + (-2.7,0)$){$\bigcap$};

\draw[->] (actions.90) -- node[text width=1.85cm,xshift=-1.15cm,yshift=0.0cm]{allowed \\[-0.2em] actions~$\Act'$} ($(agent.south)-(0.35,0)$);
\draw[->] (shield.180) -- node[text width=1.85cm,xshift=0.5cm,yshift=0.25cm]{$\Act'_1$} (actions.0);
\draw[->] (shieldi.190) -| node[text width=1.85cm,xshift=1.15cm,yshift=0.25cm]{$\Act'_i$} (actions.300);
\draw[->] (shieldn.180) -| node[text width=1.85cm,xshift=1.5cm,yshift=-0.25cm]{$\Act'_n$} (actions.240);


\end{tikzpicture}
	\caption{RL with a compositional shield. %
 }
	\label{fig:RL_Flowchart}
\end{figure*}
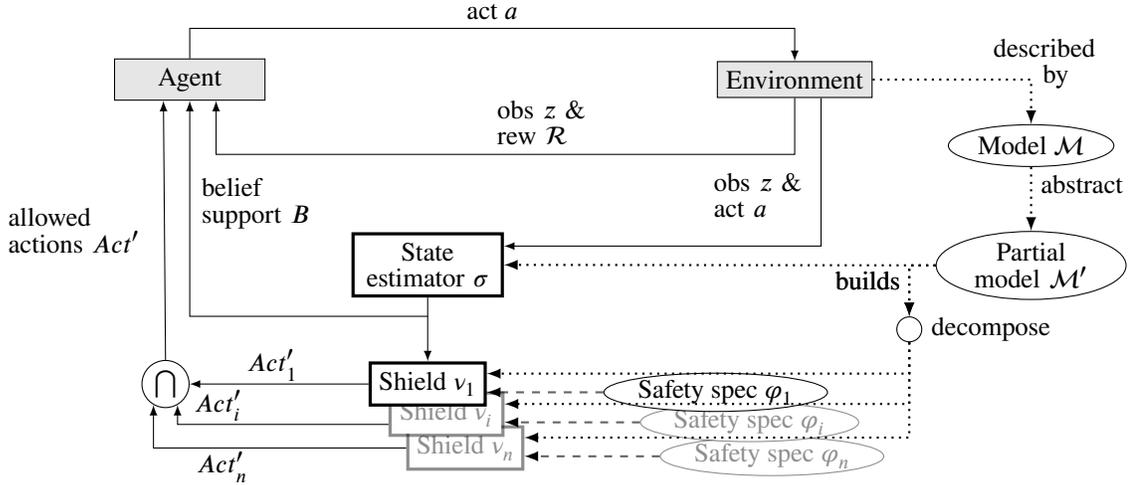

\section{SAFE RL VIA~COMPOSITIONAL SHIELDS}

We present the problem of safe learning in POMDPs.

\begin{problem}[Safety-critical RL]\label{problem:max-rew-reach-avoid}
	Given a POMDP $\pomdp$, a safety constraint $\varphi$, and let $\osched_1,\ldots,\osched_n$ be the  (training) sequence of policies employed by an agent. 
The problem is to ensure that for all policies $\osched_i$ it holds that $\pomdp(\osched_i)\models\varphi$ with $1\leq i\leq n$ and the final policy $\pi_n$ maximizes~$\mathbb{E}\left[\sum_{t=0}^\infty\gamma^t\rew_t\right]$.
\end{problem}

An optimal solution to problem~\ref{problem:max-rew-reach-avoid} may induce a lower reward than unconstrained learning, as the agent has to strictly adhere to the safety constraint while collecting rewards.

A shielded agent augments the reinforcement learning loop~\cite{sutton1998reinforcement} (Figure~\ref{fig:RL_Flowchart}).
In particular, at each decision step, the RL agent selects an action $a$ from the set of available actions $\Act(s)$. The shielded RL agent selects from the subset of these actions $\Act'(s)$ created by the permissive policy $\psched$.
Additionally, the RL agent uses the graph knowledge to track the belief support.

\subsection{Shields and RL in POMDPs}

Enforcing shielding for safe RL is useful for environments with sparse reward and incomplete information~\cite{DBLP:conf/aaai/Carr0JT23}. However, one limitation is that the size of the potential belief supports $\belsup$ increases exponentially with the number of aliased observations.

In this work, we formulate a procedure that allows an RL agent to compute a shield by parts from a set of connected sub-shields with two benefits. First, the sub-shield upper bounds the size of belief support, ensuring a computable and tractable shield. Second, it provides the learner with additional information about the features of the environment as it learns, which allows for warm-start learning~\cite{neary2022verifiable} in sparse reward settings.

\subsection{Shield Properties}
First we describe the important properties that are preserved when deploying shielding in deep RL. In particular, we motivate the use of a partial model, which representations an overapproximation of the POMDP graph.

\paragraph{Using the partial model}
The \textit{belief-support state estimator} $\estimator$  provides a light-weight, interface to track the possible system states at every loop of Figure~\ref{fig:RL_Flowchart}. This estimator also provides the input for the shield (definition~\ref{def:shield}).

\paragraph{Using the partial model via a shield}
We assume the availability of a shield that ensures reach-avoid specifications as outlined above (definition~\ref{def:shield}).
Using the state estimator, we can synthesize a shield $\psched\colon \supp(\belief)\rightarrow 2^\Act$  that operates directly on the belief support. For the specification $\varphi$, this shield yields for every belief the set of \emph{safe actions}. 
We restrict the available actions for the agent to these safe actions. For the specification $\varphi$, this shield yields for every belief the set of \emph{safe actions}.  We restrict the available actions for the agent to these safe actions.

\vspace{1em}
\subsection{Learning Outcomes}
Shielded RL agents can enforce safe decisions during learning. Without a shield, the agent must take action to record that it may transition to an unsafe state. While a shield ensures that the agent visits the reach states eventually (with probability one), there is no upper bound on the number of steps required to visit these states. The shielded agent may violate the specification if the partial model is not faithful -- it has a different graph structure -- to the true POMDP. Moreover, state estimators themselves do not directly enforce safe exploration. However, precise tracking of the belief support augments the learning agent with structured information.

\paragraph{Safety after learning}
Even after the RL agent has finished learning, agents who keep exploring may violate safety. Furthermore, reward objectives and safety constraints may not coincide, e.g., a car driving to a location. A reward structure may only focus on a particular feature, such as arriving with enough fuel, while a safety constraint only enforces collision avoidance. In such cases, there is a risk-reward consideration where the RL may accept a small but non-zero risk to avoid a significant and expensive re-route.

\paragraph{RL convergence speed}
Learning in partially observable settings remains challenging beyond providing safety guarantees, especially when rewards are sparse. The availability of a partial model can accelerate the learning process. In particular, the availability of a state estimator allows enriching the observation with a signal that compresses the history and the shield induces longer episodes that improve learning in sparse-reward environments ~\cite{DBLP:conf/aaai/Carr0JT23}.

\subsection{Compositional Shield Execution}
After computing a shield, we can filter and apply the belief support relevant to the states covered in its submodel (dashed segment in Figure~\ref{fig:synthesis_outline}). For example, if the agent's belief support contains two states: one in the top-left quadrant and one in the top-right quadrant, then the input belief support $B$ is filtered to a single state for shield $\psched_1$ and a single state for $\psched_2$.
Finally, we constrain the RL agent with the intersect of the sets of allowed actions $\Act' = \bigcap_{i=1}^{n}  \Act'_i$, eliminating any possible violating actions for all states in the belief support $\belsup$.

\subsection{Parallel Training}
One advantage of the decomposition and its corresponding shields is that we can use a learning method that is compatible and operable with compositional RL frameworks~\cite{neary2022verifiable}.
We may deploy each RL subsystem and subtask, initializing a subtask learning problem for each subtask and submodel, thereby emphasizing relevant regions and ignoring low-value subsystems.
While a deep exploration of the various compositional RL frameworks is beyond the scope of this paper, we demonstrate that this approach to shielding is compatible with higher abstraction levels.

\begin{table}[!t]

    \caption{Individual sizes and total compute times for shield and sub-shield synthesis.}
    \label{tab:synthesis}
    
    \renewcommand{\arraystretch}{1.4}
    \centering
    \begin{tabular}{lrrrr}
		\hline
		Domain (param) & \multicolumn{2}{c}{Shield Synthesis} & \multicolumn{2}{c}{Compositional} \\
		&  \multicolumn{1}{r}{\emph{States}} & \multicolumn{1}{r}{\emph{Time (s)}} 	&  \multicolumn{1}{r}{\emph{States}} & \multicolumn{1}{r}{\emph{Time (s)}} \\
		\hline
		\domain{Obstacle}(6) & 37 & 1 & 26 & 2 \\
		\domain{Obstacle}(8) & 65  & 18 & 37 & 5  \\
		\domain{Obstacle}(10) & 101 & 598 & 50 & 55   \\
		\domain{Obstacle}(16) &  257 & 6314 & 101 & 1504 \\
		\domain{Obstacle}(20) & 401 & \TO &  145 & 1605\\
		\rowcolor{gray!25} \domain{Refuel}(6,8) & 270 & 6 & 101 & 6 \\
		\rowcolor{gray!25} \domain{Refuel}(8,10) & 641  & 109 & 270 & 24  \\
		\rowcolor{gray!25} \domain{Refuel}(10,12) & 1201 & 816  & 651 & 150  \\
		\rowcolor{gray!25} \domain{Refuel}(16,18) &  4609 & \TO & 1235 & 860  \\
		\domain{Evade}(6,2) & 4232 & 142 & 1833 & 120  \\
		\domain{Evade}(8,2) & 10368 & 1510 & 2501  & 213\\
		\domain{Evade}(10,2) & 16391 & \TO & 7121 & 1609 \\
		\domain{Evade}(16,2) & \TO & \TO & 14501 & 9601 \\
		\rowcolor{gray!25} \domain{Intercept}(7,1) & 4705 & 116 & 2501 & 32 \\
		\rowcolor{gray!25} \domain{Intercept}(10,1) & 9690 & 1861 & 4705 & 450 \\
		\rowcolor{gray!25} \domain{Intercept}(16,1) & 49235  & \TO & 10032 & 4601\\
		\hline
	\end{tabular}
\end{table}

\section{EMPIRICAL ANALYSIS}

We evaluate the effectiveness of compositionally shielded reinforcement learning (RL) in four partially observable domains with sparse reward functions. These domains, adapted from \cite{DBLP:conf/cav/JungesJS20} and \cite{DBLP:conf/aaai/Carr0JT23}, present unique challenges for RL agents:
\begin{enumerate}
\item \domain{Obstacle}: An agent navigates a maze (movement reward: $-1$) with static traps, uncertain initial state, and noisy movement distance. The agent only observes whether its current position is a trap ($-10^3$ reward) or the exit ($+10^3$ reward).

\item \domain{Refuel}: A rover traverses from one corner to another ($+10$ reward), avoiding a diagonal obstacle. Movement consumes energy, and the rover can recharge at designated stations to full capacity without rewards or costs. The rover has noisy sensors for position and battery level. Collisions and battery depletion terminate the episode.

\item \domain{Evade}: An agent must reach an escape door ($+10$ reward) while evading a faster robot. The agent has a limited vision range ($\textrm{Radius}$), but can scan the entire grid instead of moving.

\item \domain{Intercept}: Compared to \domain{Evade}, the agent aims to intercept ($+10^3$ reward) a robot before it exits the grid ($-10^3$ reward). The agent has a view radius and observes a central corridor. Movements incur a $-1$ reward.
\end{enumerate}

We employ five deep RL methods: DQN \cite{DBLP:journals/nature/MnihKSRVBGRFOPB15}, DDQN \cite{DBLP:conf/aaai/HasseltGS16}, PPO~\cite{DBLP:journals/corr/SchulmanWDRK17}, discrete SAC \cite{DBLP:journals/corr/abs-1910-07207}, and REINFORCE \cite{DBLP:journals/ml/Williams92}. Episodes are limited to 100 steps, and average rewards are calculated across 10 evaluation episodes. To improve readability, we apply smoothing to all figures using a five-interval window.

For both the complete model and the submodels, we use the \emph{Storm} framework \cite{DBLP:journals/sttt/HenselJKQV22} to interface the model, shield, and state estimator. We use the same tool for both approaches to the problem (composition vs. no composition) for a fair comparison with the state-of-the-art.
Bindings to Tensorflow's \emph{TF-Agents} package \cite{TFAgents} are deployed, and its masking function is used to implement the precomputed shield \cite{DBLP:conf/aaai/Carr0JT23}. Experiments are conducted on an $8 \times 3.2$~GHz Intel Xeon Platinum 8000 series processor with 32 GB of RAM. Computations exceeding $10^5$ seconds are marked as timed out~(TO).

\subsection{Shield Synthesis through Subproblem Composition}

\paragraph{Domain decompositions}
We employ quadrant-based decompositions for the \domain{Obstacle} and \domain{Refuel} domains. For \domain{Evade} and \domain{Intercept}, we use a dual quadrant-based decomposition (16 submodels) to represent agent and adversary locations in each region.

\paragraph{Scalability improvements}
Traditional shield synthesis \cite{DBLP:conf/cav/JungesJS20,DBLP:conf/aaai/Carr0JT23} often fails to scale beyond $10^4$ states (Table~\ref{tab:synthesis}). Our compositional approach computes shields for these and larger domains, achieving a two-order-of-magnitude scalability improvement over centralized methods.

\begin{figure}[!t]
	\centering
 \begin{adjustbox}{width=.65\columnwidth,trim={1cm 0pt 0pt 0pt}, clip}
	\pgfplotstableread{
0.000000000000000000e+00 0.28
1.000000000000000000e+02 0.38
2.000000000000000000e+02 0.35
3.000000000000000000e+02 0.39
4.000000000000000000e+02 0.35
5.000000000000000000e+02 0.35
6.000000000000000000e+02 0.38
7.000000000000000000e+02 0.38
8.000000000000000000e+02 0.40
9.000000000000000000e+02 0.43
1.000000000000000000e+03 0.45
1.100000000000000000e+03 0.47
1.200000000000000000e+03 0.48
1.300000000000000000e+03 0.49
1.400000000000000000e+03 0.49
1.500000000000000000e+03 0.50
1.600000000000000000e+03 0.50
1.700000000000000000e+03 0.55
1.800000000000000000e+03 0.59
1.900000000000000000e+03 0.60
2.000000000000000000e+03 0.61
2.100000000000000000e+03 0.61
2.200000000000000000e+03 0.59
2.300000000000000000e+03 0.57
2.400000000000000000e+03 0.58
2.500000000000000000e+03 0.56
2.600000000000000000e+03 0.57
2.700000000000000000e+03 0.57
2.800000000000000000e+03 0.58
2.900000000000000000e+03 0.55
3.000000000000000000e+03 0.56
3.100000000000000000e+03 0.56
3.200000000000000000e+03 0.58
3.300000000000000000e+03 0.58
3.400000000000000000e+03 0.61
3.500000000000000000e+03 0.61
3.600000000000000000e+03 0.62
3.700000000000000000e+03 0.61
3.800000000000000000e+03 0.61
3.900000000000000000e+03 0.60
4.000000000000000000e+03 0.61
4.100000000000000000e+03 0.57
4.200000000000000000e+03 0.56
4.300000000000000000e+03 0.55
4.400000000000000000e+03 0.57
4.500000000000000000e+03 0.57
4.600000000000000000e+03 0.59
4.700000000000000000e+03 0.61
4.800000000000000000e+03 0.64
4.900000000000000000e+03 0.62
5.000000000000000000e+03 0.64
}\AblationNormNoShield
\pgfplotstableread{
0.000000000000000000e+00 4.447427795992957011e-01
1.000000000000000000e+02 7.393847317165799238e-01
2.000000000000000000e+02 7.561213921440971264e-01
3.000000000000000000e+02 7.859813916948105161e-01
4.000000000000000000e+02 7.773361059824624997e-01
5.000000000000000000e+02 7.007132802274491645e-01
6.000000000000000000e+02 7.646910032908120769e-01
7.000000000000000000e+02 7.683866658104790970e-01
8.000000000000000000e+02 7.948928319719102964e-01
9.000000000000000000e+02 8.088634432050917367e-01
1.000000000000000000e+03 8.232137785169815203e-01
1.100000000000000000e+03 8.464571091545952664e-01
1.200000000000000000e+03 8.683430542416044062e-01
1.300000000000000000e+03 8.681819435967339516e-01
1.400000000000000000e+03 8.809951086256239128e-01
1.500000000000000000e+03 8.815550535837809454e-01
1.600000000000000000e+03 8.904852762858072035e-01
1.700000000000000000e+03 8.965556677500406968e-01
1.800000000000000000e+03 9.021679441240100239e-01
1.900000000000000000e+03 9.052623887803820102e-01
2.000000000000000000e+03 9.185226098378498349e-01
2.100000000000000000e+03 9.174962216271294047e-01
2.200000000000000000e+03 9.223634423573812136e-01
2.300000000000000000e+03 9.178537777370876505e-01
2.400000000000000000e+03 9.089460575527614772e-01
2.500000000000000000e+03 9.050961136288112607e-01
2.600000000000000000e+03 9.038247255537243952e-01
2.700000000000000000e+03 8.982046154446071684e-01
2.800000000000000000e+03 9.020177231258815764e-01
2.900000000000000000e+03 9.058806101904974817e-01
3.000000000000000000e+03 9.113993868933781828e-01
3.100000000000000000e+03 9.142522193060980129e-01
3.200000000000000000e+03 9.184163290659587142e-01
3.300000000000000000e+03 9.183673329671223629e-01
3.400000000000000000e+03 9.187803879208034097e-01
3.500000000000000000e+03 9.140100548638238065e-01
3.600000000000000000e+03 9.092947213914660542e-01
3.700000000000000000e+03 9.095436663733589278e-01
3.800000000000000000e+03 9.087982211642794406e-01
3.900000000000000000e+03 9.054636096530490619e-01
4.000000000000000000e+03 9.081865538703070007e-01
4.100000000000000000e+03 9.114418324788410297e-01
4.200000000000000000e+03 9.119296665615506114e-01
4.300000000000000000e+03 9.133271090189615249e-01
4.400000000000000000e+03 9.177191109975177952e-01
4.500000000000000000e+03 9.118659447140163854e-01
4.600000000000000000e+03 9.135789441850450388e-01
4.700000000000000000e+03 9.123509997897678581e-01
4.800000000000000000e+03 9.096843341403536565e-01
4.900000000000000000e+03 9.083950557284884431e-01
5.000000000000000000e+03 9.152004997253418983e-01
}\AblationNormShield
\pgfplotstableread{
0.000000000000000000e+00 3.447427795992957011e-01
1.000000000000000000e+02 3.814335624557426238e-01
2.000000000000000000e+02 4.561213921440971264e-01
3.000000000000000000e+02 5.159813916948105161e-01
4.000000000000000000e+02 4.773361059824624997e-01
5.000000000000000000e+02 5.007132802274491645e-01
6.000000000000000000e+02 6.646910032908120769e-01
7.000000000000000000e+02 5.683866658104790970e-01
8.000000000000000000e+02 6.048928319719102964e-01
9.000000000000000000e+02 7.088634432050917367e-01
1.000000000000000000e+03 7.932137785169815203e-01
1.100000000000000000e+03 7.364571091545952664e-01
1.200000000000000000e+03 7.683430542416044062e-01
1.300000000000000000e+03 7.681819435967339516e-01
1.400000000000000000e+03 7.809951086256239128e-01
1.500000000000000000e+03 7.815550535837809454e-01
1.600000000000000000e+03 7.304852762858072035e-01
1.700000000000000000e+03 8.465556677500406968e-01
1.800000000000000000e+03 8.092309104398841857e-01
1.900000000000000000e+03 8.651823018237234621e-01
2.000000000000000000e+03 8.981283174319759210e-01
2.100000000000000000e+03 8.474962216271294047e-01
2.200000000000000000e+03 8.932821381249814711e-01
2.300000000000000000e+03 9.079230193081417471e-01
2.400000000000000000e+03 9.142123421747649894e-01
2.500000000000000000e+03 9.130941085178247172e-01
2.600000000000000000e+03 9.132193123234587375e-01
2.700000000000000000e+03 9.102046154446071684e-01
2.800000000000000000e+03 8.735677231258815764e-01
2.900000000000000000e+03 8.85678899904974817e-01
3.000000000000000000e+03 8.703993868933781828e-01
3.100000000000000000e+03 8.9022522193060980129e-01
3.200000000000000000e+03 9.158806101904974817e-01
3.300000000000000000e+03 9.083673329671223629e-01
3.400000000000000000e+03 9.077803879208034097e-01
3.500000000000000000e+03 9.040100548638238065e-01
3.600000000000000000e+03 9.192947213914660542e-01
3.700000000000000000e+03 9.195436663733589278e-01
3.800000000000000000e+03 9.187982211642794406e-01
3.900000000000000000e+03 9.154636096530490619e-01
4.000000000000000000e+03 9.181865538703070007e-01
4.100000000000000000e+03 9.052004997253418983e-01
4.200000000000000000e+03 9.219296665615506114e-01
4.300000000000000000e+03 9.033271090189615249e-01
4.400000000000000000e+03 9.177191109975177952e-01
4.500000000000000000e+03 9.218659447140163854e-01
4.600000000000000000e+03 9.078537777370876505e-01
4.700000000000000000e+03 9.192994184529819357e-01
4.800000000000000000e+03 9.196843341403536565e-01
4.900000000000000000e+03 9.183950557284884431e-01
5.000000000000000000e+03 9.223214795165219903e-01
}\AblationNormShieldComposed
\definecolor{brightgreen}{rgb}{0.616, 0.847, 0.40}
\definecolor{amethyst}{rgb}{0.6, 0.4, 0.8}
\definecolor{orangehue}{rgb}{1.0,0.627,0.337}
\definecolor{brightblue}{rgb}{0.553, 0.867, 0.8167}

\begin{tikzpicture}
\begin{axis}[name=rl_avoid,
    axis lines*=left, 
    xtick = {5000},
    enlarge y limits=false,
    enlarge x limits=false, 
    tick style={black},
xlabel = {Number of episodes},
ylabel = {Normalized reward},
ylabel style={yshift=0.2cm},
xmin=0,
xmax=5000,
ymin= 0,
ymax= 1,
width = 0.6\textwidth,
height = 0.6\textwidth,
xticklabel style={/pgf/number format/precision=2},
no markers,
every axis legend/.code={\let\addlegendentry\relax}
]

\addplot[lightgray,very thick] table[x index={0}, y index={1},col sep=space] from \AblationNormNoShield;
\addlegendentry{No shield}

\addplot[gray,very thick] table[x index={0}, y index={1},col sep=space] from \AblationNormShield;
\addlegendentry{Shielded}

\addplot[Goldenrod2,very thick] table[x index={0}, y index={1},col sep=space] from \AblationNormShieldComposed;
\addlegendentry{Compositional shield}    

\node at (axis cs:2500,.96) [anchor=west] {\textcolor{Goldenrod2}{compositional shield}};
\node at (axis cs:3100,.86) [anchor=west] {\textcolor{gray}{shield}};
\node at (axis cs:3050,.54) [anchor=west] {\textcolor{lightgray}{no shield}};

\end{axis}
\end{tikzpicture}
 \end{adjustbox}
	\caption{Learning agents with different shield types on a reward normalized across the set of small domains.}
	\label{fig:empirical_small}
\end{figure}
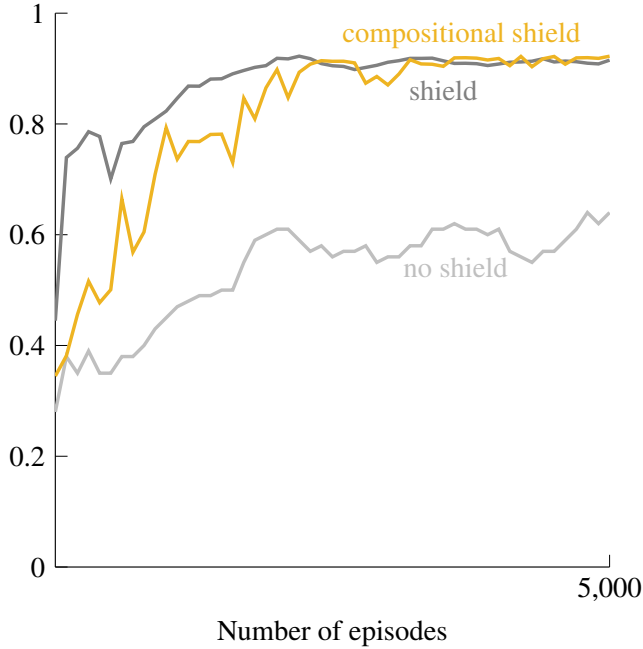

\paragraph{Learning performance}
We normalize RL agent performance to [0, 1], with 1 representing the highest achievable value. In small domains, compositional shields require slightly more data to converge than centralized shields (Figure~\ref{fig:empirical_small}). However, the compositional RL framework improves data efficiency, with faster convergence than initializing RL only in initial states (Figure~\ref{fig:emperical_large}).

\paragraph{Safety and performance in large domains}
In large domains where centralized shield synthesis is infeasible, agents with compositional shields significantly outperform unshielded agents (Figure~\ref{fig:emperical_large}). Shielding ensures safety throughout learning, with sub-shields maintaining this property (Table~\ref{tab:shield_metrics}). Although sub-shields may converge slower in small domains, compositional shielding is crucial for safe and efficient learning in complex environments.

For ease of comparison across domains and learning methods, we normalized the RL agent performance across all domains. We modify the reward such that the agent will take a value between $0$ and $1$, where $1$ is the known highest possible value that the RL agent can achieve.

\paragraph{Equivalent performance to centralized shield in small domains} When comparing the learning performance for the shield with the compositional shield, we observe that the compositional shield needs slightly more data to converge (dark gray line compared to the yellow line in Figure~\ref{fig:empirical_small}).

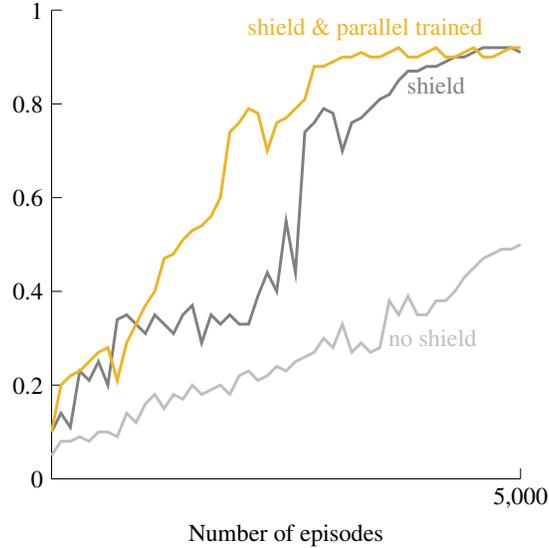
\begin{figure}[!t]
	\centering
  \begin{adjustbox}{width=.55\columnwidth,trim={1cm 0pt 0pt 0pt}, clip}
	\pgfplotstableread{
0	0.10
100	0.14
200	0.11
300	0.23
400	0.21
500	0.25
600	0.20
700	0.34
800	0.35
900	0.33
1000 0.31
1100 0.35
1200 0.33
1300 0.31
1400 0.35
1500 0.37
1600 0.29
1700 0.35
1800 0.33
1900 0.35
2000 0.33
2100 0.33
2200 0.39
2300 0.44
2400 0.40
2500 0.55
2600 0.44
2700 0.74
2800 0.76
2900 0.79
3000 0.78
3100 0.70
3200 0.76
3300 0.77
3400 0.79
3500 0.81
3600 0.82
3700 0.85
3800 0.87
3900 0.87
4000 0.88
4100 0.88
4200 0.89
4300 0.90
4400 0.90
4500 0.91
4600 0.92
4700 0.92
4800 0.92
4900 0.92
5000 0.91
}\AblationNormShieldLarge
\pgfplotstableread{
0	0.05
100	0.08
200	0.08
300	0.09
400	0.08
500	0.10
600	0.10
700	0.09
800	0.14
900	0.12
1000 0.16
1100	0.18
1200	0.15
1300	0.18
1400	0.17
1500	0.20
1600	0.18
1700	0.19
1800	0.20
1900	0.18
2000	0.22
2100	0.23
2200	0.21
2300	0.22
2400	0.24
2500	0.23
2600	0.25
2700	0.26
2800	0.27
2900	0.30
3000	0.28
3100	0.33
3200	0.27
3300	0.29
3400	0.27
3500	0.28
3600	0.38
3700	0.35
3800	0.39
3900	0.35
4000	0.35
4100	0.38
4200	0.38
4300	0.40
4400	0.43
4500	0.45
4600	0.47
4700	0.48
4800	0.49
4900	0.49
5000	0.50
}\AblationNormNoShieldLarge
\pgfplotstableread{
0	0.1
100	0.2
200	0.22
300	0.23
400	0.25
500	0.27
600	0.28
700	0.21
800	0.29
900	0.33
1000 0.37
1100	0.4
1200	0.47
1300	0.48
1400	0.51
1500	0.53
1600	0.54
1700	0.56
1800	0.6
1900	0.74
2000	0.76
2100	0.79
2200	0.78
2300	0.70
2400	0.76
2500	0.77
2600	0.79
2700	0.81
2800	0.88
2900	0.88
3000	0.89
3100	0.90
3200	0.90
3300	0.91
3400	0.90
3500	0.90
3600	0.91
3700	0.92
3800	0.90
3900	0.90
4000	0.91
4100	0.92
4200	0.90
4300	0.90
4400	0.91
4500	0.92
4600	0.90
4700	0.90
4800	0.91
4900	0.92
5000	0.92
}\AblationNormShieldLargeParallel
\definecolor{brightgreen}{rgb}{0.616, 0.847, 0.40}
\definecolor{amethyst}{rgb}{0.6, 0.4, 0.8}
\definecolor{orangehue}{rgb}{1.0,0.627,0.337}
\definecolor{brightblue}{rgb}{0.553, 0.867, 0.8167}

\begin{tikzpicture}
	\begin{axis}[name=rl_avoid,
    axis lines*=left, 
    xtick = {5000},
    enlarge y limits=false,
    enlarge x limits=false, 
    tick style={black},
	xlabel = {Number of episodes},
	ylabel = {Normalized reward},
	ylabel style={yshift=0.2cm},
	xmin=0,
	xmax=5000,
	ymin= 0,
	ymax= 1,
	width = .6\textwidth,
	height =.6\textwidth,
	no markers,
    every axis legend/.code={\let\addlegendentry\relax}]

\addplot[lightgray,very thick] table[x index={0}, y index={1},col sep=space] from \AblationNormNoShieldLarge;
     \addlegendentry{No shield}

\addplot[gray,very thick] table[x index={0}, y index={1},col sep=space] from \AblationNormShieldLarge;
\addlegendentry{Shielded}

\addplot[Goldenrod2,very thick] table[x index={0}, y index={1},col sep=space] from \AblationNormShieldLargeParallel;
     \addlegendentry{Shielded \& parallel trained }    
\node at (axis cs:2000,.96) [anchor=west] {\textcolor{Goldenrod2}{shield \& parallel trained}};
\node at (axis cs:3700,.84) [anchor=west] {\textcolor{gray}{shield}};
\node at (axis cs:3500,.3) [anchor=west] {\textcolor{lightgray}{no shield}};

\end{axis}
\end{tikzpicture}
  \end{adjustbox}
	\caption{Learning agents with different shield types on a reward normalized across the set of domains too large for a centralized shield.}
	\label{fig:emperical_large}
\end{figure}

\paragraph{Improved performance compared to unshielded agents in larger domains} When we compare the shielded agent to that of an unshielded agent in very large domains that we are unable to synthesize shields for (dark gray outperforms light gray in Figure~\ref{fig:emperical_large}).

\paragraph{Improved data efficiency using compositional RL} When we deploy a compositional RL framework, the RL converges faster than a simple problem where we perform RL initialized only in initial states (yellow line in Figure~\ref{fig:emperical_large} converges around $2000$ episode while the dark gray line converges around $3500$). Here, the RL agent may initialize the belief support on the sub-models that are not inside the set of initial distribution, which allows the RL to better explore regions of interest without relying on repetitive trajectories.

\paragraph{Enforcing safety while learning} Shielding ensures that the agent is safe while learning, i.e., it will never violate the safety specification $\varphi$ (table~\ref{tab:shield_metrics}). We demonstrate that the sub-shields maintain this property but it does not as easily enforce that the RL agent will reach the goal during the learning process. This condition leads to the sub-shields taking slightly longer to converge in smaller domains when compared to overall shield. However, the RL agent employing a compositional shield still significantly outperforms an unshielded agent in larger domains.

\begin{table}[!t]
\renewcommand{\arraystretch}{1.3}
    \centering
    \caption{Average violation and positive rewards during and after reinforcement learning.}
    \label{tab:shield_metrics}
	\begin{tabular}{lrrrr}
		\hline
		Learning Setting & \multicolumn{2}{c}{No. violations} & \multicolumn{2}{c}{\% successful runs}\\
		&  \multicolumn{1}{r}{\emph{During}} & \multicolumn{1}{r}{\emph{After}} 	&  \multicolumn{1}{r}{\emph{During}} & \multicolumn{1}{r}{\emph{After}}  \\
		\hline
		\emph{No shield} & 3153 & 1023 & 25\% & 40\% \\
		\emph{Shield} & 0  & 0  & 90\% & 95\%  \\
		\emph{Sub-shields} & 0 & 0 & 75\% & 95\% \\
		\hline
	\end{tabular}
\end{table}

\paragraph{Automated Approaches to Decomposition}
While the decompositions used in our experiments (e.g., quadrant-based or feature-partitioned) are manually chosen to match domain symmetries, we acknowledge the need for more systematic methods. One potential direction is to leverage the sparsity structure of the transition graph associated with the POMDP $\pomdp$. 
Specifically, we can treat the POMDP’s transition kernel $\probmdp'$ as a sparse adjacency matrix and apply graph partitioning or sparse matrix decomposition techniques such as spectral clustering \cite{azizzadenesheli2016reinforcement} to identify weakly connected subgraphs. 
These subgraphs correspond to regions of the state space with limited interconnectivity, making them ideal candidates for submodels. Such an approach not only reduces cross-submodel dependencies (and hence conservatism in the shield) but also provides a data-driven, automated method for selecting a “good” decomposition that scales with the topology of the environment. We consider the design and analysis of such structured decomposition strategies a valuable direction for future work.

\section{CONCLUSION}

Shield synthesis is an effective method for ensuring safety requirements in systems that operate within an intrinsic uncertainty. However, shielding needs to scale more efficiently with the size of the problem formulation. Exploiting compositionality in the POMDP problem formulation parallelizes shield synthesis, making it more efficient without sacrificing effectiveness. In particular, compositional shield synthesis relies on connecting state and action spaces between submodel formations via information sharing. We demonstrate this increase in efficiency and preservation of shield properties using compositional synthesis with state-of-the-art deep RL algorithms in sparse-reward and partially observable environments.

\section*{ACKNOWLEDGMENTS}
S. Carr and U. Topcu are partially supported by the US DoD Air Force Office of Scientific Research under award number AFOSR FA9550-22-1-0403 and the  National Science Foundation (NSF) under award number NSF 2409535. 

G. Bakirtzis is partially supported by the academic and research chair \emph{Architecture des Systemes Complexes} through the following parters: Dassault Aviation, Naval Group, Dassasut Systemes, KNDS France, Agence de L'Innovation de Defense and Institut Polytechnique de Paris.

The authors would like to thank Sebastian Junges and Nils Jansen for input
at the early stages of this research.
\bibliography{manuscript}
\bibliographystyle{IEEEtran}
\end{document}